\documentclass[conference]{IEEEtran}
\IEEEoverridecommandlockouts
\usepackage{textcase}
\usepackage{amsmath,amssymb,amsfonts}
\usepackage{textcomp}
\usepackage{balance}
\usepackage{amsthm}
\usepackage[expansion=true,protrusion=true,final]{microtype}
\newtheorem*{proposition}{Proposition}
\usepackage{graphicx}
\usepackage{booktabs}
\usepackage{siunitx}
\usepackage{cite}
\usepackage{url}
\usepackage{multirow}
\usepackage{makecell}
\usepackage{pgfplots}
\usetikzlibrary{patterns}
\usepackage{array}
\usepackage{quantikz}
\usepackage{subcaption}
\usepackage{pgfplotstable}
\pgfplotsset{compat=1.18}

\usepgfplotslibrary{groupplots}
\usepackage{xcolor}
\usepackage{bm}
\usepackage{titlesec}
\titlespacing*{\section}{0pt}{5pt}{2pt}
\titlespacing*{\subsection}{0pt}{3pt}{1pt}
\titlespacing*{\subsubsection}{0pt}{2pt}{1pt}

\definecolor{cBlue}{HTML}{5C8FBD}
\definecolor{cGreen}{HTML}{5B9E82}
\definecolor{cPurple}{HTML}{8B7AB5}

\begin{document}
\setlength{\textfloatsep}{6pt plus 2pt minus 2pt}
\setlength{\floatsep}{5pt plus 1pt minus 1pt}
\setlength{\intextsep}{5pt plus 1pt minus 1pt}
\setlength{\dbltextfloatsep}{6pt plus 2pt minus 2pt}
\setcounter{secnumdepth}{2}

\title{Configurable Algorithms for Histopathologic Cancer Detection on Quantum Hardware}

\author{%
  \IEEEauthorblockN{Nandika Goyal}
  \IEEEauthorblockA{\textit{SenSIP Center, School of ECEE}\\
  \textit{Arizona State University}\\
Tempe, AZ, USA}
  \and
  \IEEEauthorblockN{Glen Uehara}
  \IEEEauthorblockA{\textit{SenSIP Center, School of ECEE}\\
  \textit{Arizona State University}\\
Tempe, AZ, USA}
  \and
  \IEEEauthorblockN{Andreas Spanias}
  \IEEEauthorblockA{\textit{SenSIP Center, School of ECEE}\\
  \textit{Arizona State University}\\
Tempe, AZ, USA} }

\maketitle
\setlength{\abovedisplayskip}{2pt}
\setlength{\belowdisplayskip}{2pt}
\setlength{\abovedisplayshortskip}{0pt}
\setlength{\belowdisplayshortskip}{0pt}
\begin{abstract}
Histopathologic cancer detection is challenging due to tissue variability, staining differences, and subtle visual distinctions between disease classes. We propose two quantum algorithms for this task: a configurable dual-gradient CSWAP circuit (DG-CSWAP) that computes multi-directional edge responses in a single execution via per-pixel local $R_y$ encoding, and a hardware-efficient destructive swap circuit (DG-DST) natively matched to quantum processing unit (QPU) gate sets at substantially lower circuit complexity. We prove algebraic equivalence between DG-CSWAP and DG-DST, enabling a two-circuit QPU validation strategy. A three-stage NISQ mitigation pipeline, including readout error correction, bias subtraction, and slope regression, reduces single-pixel hardware MSE by ${\sim}8\times$. Validated on five quantum processors via Amazon Braket, the method achieves inter-platform Pearson $r \approx 0.93$--$0.94$ across all local-simulator pairs. Compared to a prior Quantum Fourier Transform (QFT) based amplitude-encoding baseline requiring 12-qubit global state preparation and a three-model ensemble (85.55\% on PatchCamelyon), the proposed method uses shot-based measurements, executes on real quantum hardware, and achieves 79.80\% accuracy with a single ResNet-50. A Lite configuration delivers a $17\times$ preprocessing speedup at a 2.59\% accuracy cost. To the best of our knowledge, this is the first quantum hardware implementation study with noise mitigation for histopathologic image classification.
\end{abstract}

\begin{IEEEkeywords}
Amazon Braket, Quantum Hardware Implementation, Histopathology Image, Cancer Detection, Quantum Noise Mitigation
\end{IEEEkeywords}

\section{Introduction}

Histopathological cancer detection from Hematoxylin and Eosin (H\&E) stained patches~\cite{madusanka2023} is a grand-challenge image classification problem. The classification process~\cite{touhami2026,dunn2025,lafarge2017} is especially challenging because of high variability in tissue appearance, staining differences, and confusability~\cite{kumar2017} between disease types. Cancerous tissue exhibits subtle micro-architectural changes, such as disorganized gland boundaries, shifted nuclear morphology, and irregular cell-cluster geometry. The problem is further complicated by limited labeled data and image artifacts, which result in significant computational requirements in machine learning training. Classical gradient operators such as Canny respond uniformly to all intensity discontinuities, often amplifying staining artifacts while suppressing diagnostically relevant boundary structure.
\begin{figure}[h]

  \centering
  \includegraphics[width=0.45\textwidth]{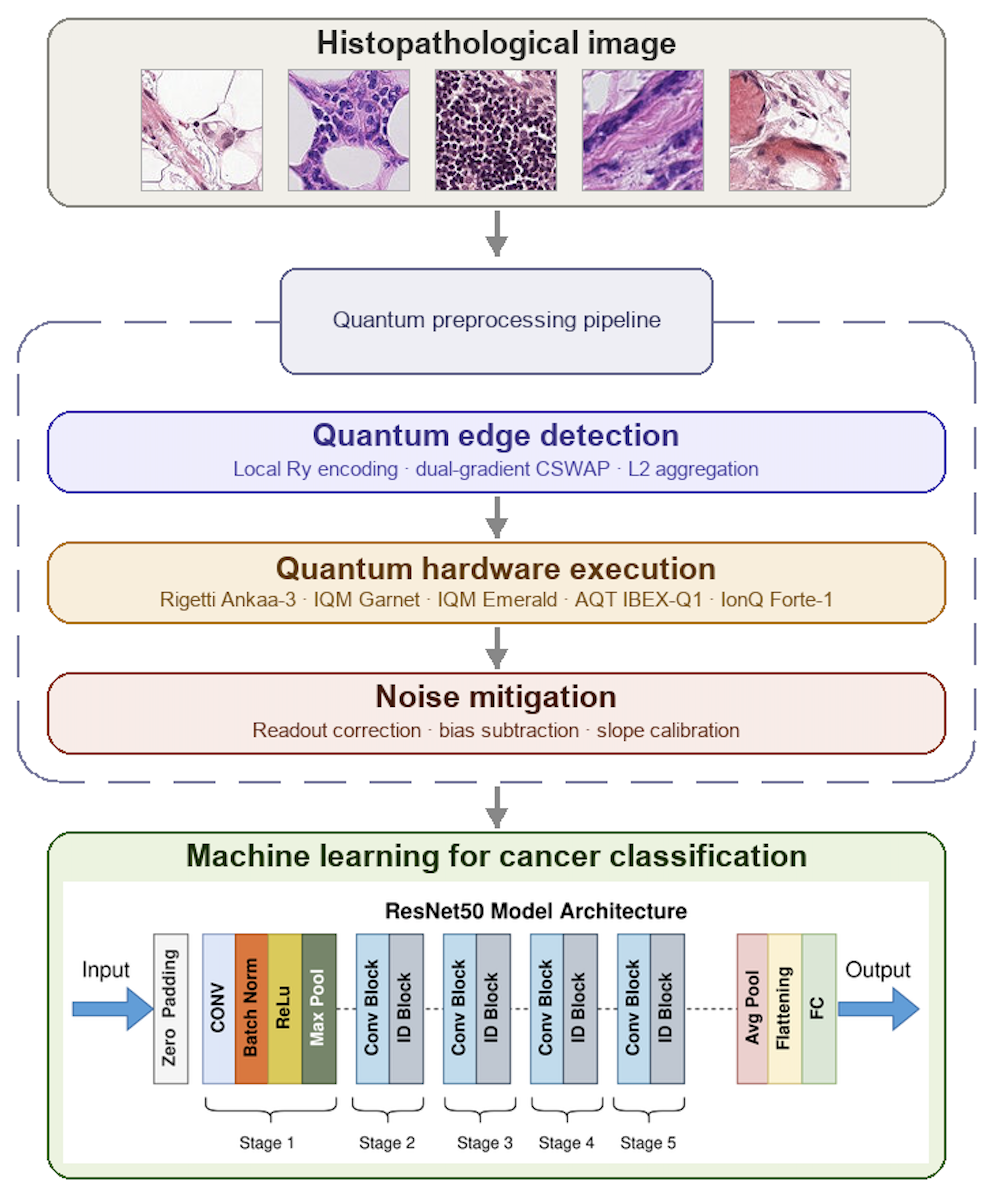}
  \caption{Overview of the proposed quantum edge detection pipeline for
histopathologic cancer classification on real quantum hardware.}
  \label{fig:pipeline}

\end{figure}
In our previous simulation study~\cite{goyal2025}, we observed sensitivity to quantum noise along with challenges related to quantum circuit design and the impact of encoding choices on classification performance. Additional challenges included limited qubit resolution, the requirement to operate strictly within noisy intermediate-scale quantum (NISQ) hardware constraints, and heavy computational overhead in machine learning training with histopathologic image datasets.

In this paper, we propose new quantum hardware-compatible circuit designs along with efficient implementation of new algorithms on quantum hardware for histopathologic cancer classification (Figure~\ref{fig:pipeline}). The SWAP test, originally introduced as a quantum overlap-estimation routine, measures how similar two quantum states are through interference-based measurement~\cite{buhrman2001}. Existing quantum edge detectors encode the full image into a global multi-qubit amplitude state and recover the result through statevector access or small-scale simulation workflows~\cite{yao2017}. That design is difficult to execute on physical QPUs because global state preparation is costly and statevector readout is not available on hardware. Hybrid quantum-classical approaches~\cite{geng2022,nunziata2025,billias2025} reduce circuit depth but have not demonstrated a full image-to-image edge transform on multiple real QPUs. In contrast, our method compares only local neighboring pixel pairs: each pixel is angle-encoded with a single-qubit $R_y$ rotation, and a SWAP-test overlap estimator measures pairwise similarity using finite-shot measurements. Repeating the same local circuit template across the image yields a full edge map without requiring full-state readout. Compared with amplitude encoding, the local approach does not capture image-wide correlations within a single quantum state, but it gains shallow preparation, direct measurability, and execution on real quantum hardware. Our prior work~\cite{goyal2025} was based on a Quantum Fourier Transform (QFT) pipeline and achieved 85.55\% accuracy on PCam with a QFT-based pipeline but remained confined to simulation for this reason.

The proposed methods present a practical measurement protocol for overlap-derived edges that converts per-pixel local $R_y$ encoding into a calibrated, hardware-executable edge-magnitude field, complete with estimator-specific noise correction and QPU validation across five device architectures. Figure~\ref{fig:pipeline} illustrates the proposed end-to-end pipeline, from histopathological image input through quantum edge detection, quantum hardware execution via Amazon Braket, noise mitigation, and final cancer classification.

This study contributes the following to quantum hardware implementation for medical image analysis: (a) we introduce a \emph{dual-gradient overlap primitive}, i.e., a DG-CSWAP circuit that computes horizontal and vertical edge responses simultaneously in a single execution, halving per-pixel circuit overhead compared to sequential evaluation; (b) we replace global amplitude encoding with local single-qubit $R_y$ encoding of neighboring pixel pairs, enabling complete $64{\times}64$ edge-map generation on real QPUs through shot-based overlap measurement without statevector readout; (c) we prove algebraic equivalence between DG-CSWAP and DG-DST, enabling a ``two circuits, one estimator'' QPU validation strategy natively matched to hardware gate sets; and (d) we introduce a three-stage NISQ mitigation pipeline combining readout confusion-matrix inversion, piecewise bias subtraction, and global slope regression, reducing single-pixel hardware MSE by ${\sim}8{\times}$ on a representative superconducting device and applied across all five platforms. Cross-platform full-image validation is performed on five real quantum processors referred to throughout as SC$n$ (superconducting, $n$ qubits) and TI$n$ (trapped-ion, $n$ qubits): SC82, SC20, SC54, TI12, and TI36 (full specifications in Section~\ref{sec:crossplatform}), with PCam classification results reported for both Full and Lite configurations.

The rest of the paper is organized as follows. Section~\ref{sec:method} describes the proposed methods. Section~\ref{sec:experiments} presents current results, and Section~\ref{sec:conclusion} summarizes our conclusions.

\section{Proposed Method}
\label{sec:method}

\subsection{Related Work and Background}
Related work spans quantum edge detection, quantum preprocessing for histopathology, and NISQ error mitigation. 

\textbf{i. Quantum edge detection.}
The SWAP test was introduced by Buhrman et al.~\cite{buhrman2001} for quantum fingerprinting and applied to edge detection as a local overlap measure in the quantum Hadamard edge detection (QHED) framework~\cite{yao2017}. QHED requires encoding the full image as a global $2^n$-qubit amplitude state and reading out the statevector, making it unsuitable for real QPUs. The ancilla-free destructive SWAP variant~\cite{garcia2013} avoids the ancilla qubit. Das et al.~\cite{das2023} validated this method on IBMQ under hardware noise. Hybrid quantum-classical approaches~\cite{geng2022,nunziata2025,billias2025} reduce circuit depth but remain limited to a single device and small-scale validation.

\textbf{ii. Quantum preprocessing for histopathology.}
The PatchCamelyon (PCam) benchmark~\cite{veeling2018} provides $96{\times}96$ Hematoxylin and Eosin (H\&E) patches for patch-level cancer classification. Majumdar et al.~\cite{majumdar2023} apply hybrid quantum-classical transfer learning to PCam at the classification stage rather than preprocessing. Our prior work~\cite{goyal2025} used QFT-based quantum edge extraction as preprocessing, achieving 85.55\% with two ResNet-50 models that included classical and hybrid edge-detection combined via a support vector machine (SVM) meta-classifier. This method required 12-qubit global state preparation and full statevector readout, which confined it to simulation. Our proposed method removes both constraints, i.e., local angle encoding requires one qubit per pixel, and shot-based measurement replaces statevector readout, enabling QPU execution at full image scale.

\textbf{iii. NISQ error mitigation.}
Readout confusion-matrix inversion and bias calibration are standard NISQ mitigation techniques. Das et al.~\cite{das2023} apply them in quantum pattern recognition on real QPUs. The global slope regression stage introduced here addresses visibility attenuation, i.e. the systematic reduction in the measured quantum signal strength relative to the theoretical ideal, specific to the overlap estimator under hardware noise in quantum edge detection.

\subsection{Angle Encoding}

Here we propose encoding each grayscale pixel $p \in [0,1]$ as a single-qubit state via rotation about the Bloch-sphere $y$-axis, $R_y(\theta)=e^{-i\theta Y/2}$:
\begin{equation}
\theta = \operatorname{clip}(g\pi p,\,0,\,\pi), \qquad
|\psi(p)\rangle = R_y(\theta)\,|0\rangle,
\end{equation}
where $g = 1.8$ is a gain factor that increases sensitivity to mid-range intensity differences (clamped to $\pi$ to prevent wrap-around). This encoding maps each pixel to a point on the Bloch sphere, requires exactly one qubit per pixel, and introduces no entanglement in the encoding step.

\subsection{Dual-Gradient Swap-Test Overlap Estimator}

Edge strength between neighboring pixels $p_a$ and $p_b$ is measured as quantum state dissimilarity via the swap test:
\begin{equation}
\begin{aligned}
P(\text{anc}=1) &= \frac{1 - |\langle\psi_a|\psi_b\rangle|^2}{2},\\ e &= 2P(\text{anc}=1) = 1 - |\langle\psi_a|\psi_b\rangle|^2.
\end{aligned}
\label{eq:swap}
\end{equation}
For angle-encoded states, $|\langle\psi_a|\psi_b\rangle|^2 = \cos^2\!\bigl(\tfrac{\theta_a-\theta_b}{2}\bigr)$, so $e = \sin^2\!\bigl(\tfrac{\theta_a-\theta_b}{2}\bigr)$. Identical pixels yield $e=0$. Maximally different pixels yield $e=1$.

The key primitive introduced in this work, the \emph{dual-gradient overlap estimator}, runs two independent swap tests for horizontal (H) and vertical (V) gradients in parallel within a single circuit execution, reducing per-pixel circuit overhead by a factor of two compared to sequential evaluation.

\subsection{6-Qubit DG-CSWAP Circuit}

Two SWAP tests run in parallel within a single 6-qubit circuit, with the following qubit layout:
$q_0,q_1$ (H data), $q_2,q_3$ (V data), $q_4$ (H ancilla), $q_5$ (V ancilla). Each pixel is angle-encoded by an $R_y$ rotation on its corresponding data qubit. The circuit then applies a Hadamard gate to each ancilla, performs the H-channel CSWAP on $(q_4,q_0,q_1)$ and the V-channel CSWAP on $(q_5,q_2,q_3)$ in parallel, applies a final Hadamard gate to each ancilla, and measures all 6 qubits. All qubits are measured to satisfy backends that require full-register measurement. The H and V edge estimates are recovered from the two ancilla measurement probabilities in post-processing.

At the logical level, excluding measurements, Figure~\ref{fig:circuit6} contains ten gates: four $R_y$ gates, four Hadamard gates, and two CSWAP gates. For hardware compilation, each logical CSWAP decomposes into three Toffoli gates. For each channel $c \in \{\mathrm{H},\mathrm{V}\}$, let $P_{1,c}$ denote the measured ancilla-1 probability and let $b_c$ denote the calibrated bias. The corrected edge estimate is
\begin{equation}
e^{(\mathrm{corr})}_c = \operatorname{clip}\!\bigl(2(P_{1,c}-b_c),\,0,\,1\bigr).
\end{equation}

\begin{figure}[h]
\centering
\begin{quantikz}[row sep=0.13cm, column sep=0.35cm]
\lstick{$q_4$} & \gate{H} & \ctrl{1} & \gate{H} & \meter{} \\
\lstick{$q_0$} & \gate{R_y(\theta_{ha})} & \swap{1} & \qw & \qw \\
\lstick{$q_1$} & \gate{R_y(\theta_{hb})} & \targX{} & \qw & \qw \\
\lstick{$q_5$} & \gate{H} & \ctrl{1} & \gate{H} & \meter{} \\
\lstick{$q_2$} & \gate{R_y(\theta_{va})} & \swap{1} & \qw & \qw \\
\lstick{$q_3$} & \gate{R_y(\theta_{vb})} & \targX{} & \qw & \qw \\
\end{quantikz}
\caption{The 6-qubit DG-CSWAP circuit computes H and V edge responses
in parallel. Each ancilla ($q_4$, $q_5$) runs an independent swap test over its data pair. Only the ancilla measurement outcomes are used for edge estimation.}
\label{fig:circuit6}
\end{figure}

\subsection{Multi-Scale Gradient Evaluation via Pixel-Neighbor Offsets}
\label{sec:multiscale}

The algorithm evaluates edge strength at each pixel by comparing it to its neighbors at spatial offset $d \in \{1, 2\}$ in four directions. Concretely, for pixel $(i,j)$:
\begin{align}
    \text{H:}& \quad \theta_a = \theta_{i,j},\; \theta_b = \theta_{i,j+d}, \nonumber\\
    \text{V:}& \quad \theta_a = \theta_{i,j},\; \theta_b = \theta_{i+d,j}, \nonumber\\
    \text{D1:}& \quad \theta_a = \theta_{i,j},\; \theta_b = \theta_{i+d,j+d}, \nonumber\\
    \text{D2:}& \quad \theta_a = \theta_{i+d,j},\; \theta_b = \theta_{i,j+d}. \nonumber
\end{align}
The output at pixel $(i,j)$ for each direction is the \emph{maximum} edge response over $d \in \{1,2\}$, providing two-scale gradient sensitivity without spatial averaging: $e^{\text{dir}}_{i,j} = \max_{d \in \{1,2\}} e_{i,j}^{\text{dir},d}$. The final per-pixel edge magnitude combines all active directions in L2:
\begin{equation}
m_{i,j} = \sqrt{(e^H_{i,j})^2 + (e^V_{i,j})^2 + \mathbb{I}_{\text{diag}}
    \!\left[(e^{D1}_{i,j})^2 + (e^{D2}_{i,j})^2\right]}.
    \label{eq:mag}
\end{equation}
The map is normalized to $[0,1]$ using the 99th percentile for classification preprocessing or 99.5th percentile for cross-platform validation. The output is a full image-to-image transform where the geometric coverage (directions, diagonals, and offsets) is configurable, in contrast to prior methods that compute a single scalar overlap response per pixel~\cite{yao2017,das2023}.

\section{Current Results}
\label{sec:experiments}

\subsection{Noise Mitigation, Shot Efficiency, and Resource Modeling}
\label{sec:noise}

\subsubsection{Ancilla Bias Calibration}

On real hardware, the ancilla qubit typically exhibits a nonzero probability of being measured in the $|1\rangle$ state even when the two input states are identical. In practice, state-preparation and measurement (SPAM) errors, residual entanglement, readout asymmetry, and coherent over-rotations introduce a systematic offset. To estimate this offset, five calibration circuits are run with equal-state inputs spanning $\theta \in \{0, \pi/4, \pi/2, 3\pi/4, \pi\}$ before processing each image. Because both inputs are identical in each circuit, any nonzero ancilla-1 probability reflects hardware bias rather than edge signal. The bias estimate is $b = \frac{1}{5}\sum_{i=1}^{5} p^{\mathrm{cal}}_i$, and is subtracted from subsequent ancilla-1 probability estimates: $\tilde{p} = p_\mathrm{meas} - b$.

\subsubsection{Readout Correction and Global Slope Calibration}
\label{sec:slope}

The ancilla bias calibration corrects the additive noise floor of the 6-qubit CSWAP estimator. For full-image deployment scripts executed on real QPUs, this complete three-stage correction pipeline is applied to the DG-DST estimator, addressing three distinct error sources in sequence. SC82 full-image runs applied only the first two stages. The slope regression stage was not activated because the 20-replica, 6-pass averaging configuration already suppresses per-replica visibility attenuation through spatial averaging, making the additional calibration overhead unnecessary.

\textbf{Stage 1 --- Readout (ro) correction.}
Each qubit's confusion matrix is measured by preparing all-zeros and all-ones states and recording the resulting bitstring distributions. The $2\times2$ per-qubit matrix is inverted, and the joint confusion matrix for each qubit pair is formed via Kronecker product, recovering the corrected joint probability
\(p_\text{true} = (M_{q_0}^{-1} \otimes M_{q_1}^{-1})\, p_\text{meas}\)
from the measured distribution \(p_\text{meas}\).
This corrects systematic misclassification of $|0\rangle$ and $|1\rangle$ states caused by SPAM errors.

\textbf{Stage 2 --- Piecewise-interpolated bias subtraction.}
Equal-state calibration circuits at $\theta \in \{0, \pi/2, \pi\}$ are run after ro correction. The residual nonzero $p(\Psi^-)$ at each equal-state point represents the hardware bias remaining after ro correction. The bias at any intermediate mean angle $\bar{\theta} = (\theta_a + \theta_b)/2$ is linearly interpolated between the nearest two measurements and subtracted as follows: $\tilde{p} = p_\text{ro} - b(\bar{\theta})$.

\textbf{Stage 3 --- Global slope regression.}
Even after bias subtraction, the hardware response may exhibit a systematic visibility attenuation relative to the ideal $p(\Psi^-)$ curve. Each replica is an independent copy of the circuit assigned to a distinct qubit subset of the device. A global slope $\lambda$ is fitted per replica via least-squares regression through the origin, using known-edge calibration circuits at $(\theta, \delta)$ pairs where $p_\text{ideal}(\delta) = \tfrac{1}{2}\sin^2(\delta/2)$ is analytically known, i.e.,
\begin{equation}
    \lambda = \frac{\sum_i x_i\,(y_i - b_i)}{\sum_i x_i^2}, \quad
x_i = p_\text{ideal}(\delta_i),\; y_i = p_\text{ro}(\theta_i, \delta_i),
\end{equation}
where $b_i = b(\bar{\theta}_i)$ is the interpolated bias at the mean angle of each circuit. The fully corrected estimator is then
\begin{equation}
    \hat{p} = \mathrm{clip}\!\left(\frac{p_\text{ro} - b(\bar{\theta})}{\lambda},\; 0,\; 0.5\right).
    \label{eq:full_correction}
\end{equation}

The Full configuration processes $96\!\times\!96$ patches with offsets $d\in\{1,2\}$, diagonal directions, symmetrization, and an adaptive shot budget of $S_\mathrm{min}{=}300$ to $S_\mathrm{max}{=}2{,}000$. Symmetrization averages edge estimates from both pixel orderings within each pair to reduce directional bias. The Lite configuration reduces to $64\!\times\!64$ patches, offset $d=1$, no diagonal directions, and no symmetrization. Diagonal directions are dropped because H/V gradients capture the dominant boundary structure for patch-level classification, and symmetrization is omitted because both pixel-pair orderings contribute negligibly to classification accuracy while doubling circuit count. Moving from Full to Lite relaxes $\mathrm{SE}_\mathrm{target}$ (the maximum allowed binomial standard error per ancilla probability estimate) by 60\% and reduces circuit count by 18$\times$, yielding a combined 43$\times$ reduction in total shots and 17$\times$ wall-clock speedup on Amazon Braket~\cite{braket}, at a small 2.59\% accuracy cost. If the edge map feeds a downstream classifier, the Lite configuration is strongly preferred.
\subsection{QPU Hardware Validation}
\label{sec:qpu}

\subsubsection{The 4-Qubit DG-DST Circuit}

The 6-qubit CSWAP circuit requires each CSWAP to stay coherent across three Toffoli gates, which is deep for NISQ hardware. For QPU validation we use a shallower
\emph{DG-DST}~\cite{garcia2013} that requires no ancilla.

For a single channel with pixel angles $\theta_a$ (control $c$) and $\theta_b$ (target $t$): (1)~Prepare $R_y(\theta_a)|0\rangle_c$ and $R_y(\theta_b)|0\rangle_t$. (2)~Apply CNOT$(c \to t)$. (3)~Apply H$(c)$. (4)~Measure both qubits; record $P(\text{``}11\text{''}) \equiv P(c{=}1,t{=}1)$. The full circuit for H and V channels uses 4 qubits total and is shown in Figure~\ref{fig:circuit4}. For hardware execution the circuit is compiled to each device's native gate set and submitted via AWS Braket.

\begin{figure}[h]
\centering
\begin{quantikz}[row sep=0.19cm, column sep=0.32cm]
\lstick{$h_c$} & \gate{R_y(h_a)} & \qw      & \ctrl{1}  & \qw      & \gate{H} & \meter{} \\
\lstick{$h_t$} & \gate{R_y(h_b)} & \gate{H} & \gate{CZ} & \gate{H} & \qw      & \meter{} \\
\lstick{$v_c$} & \gate{R_y(v_a)} & \qw      & \ctrl{1}  & \qw      & \gate{H} & \meter{} \\
\lstick{$v_t$} & \gate{R_y(v_b)} & \gate{H} & \gate{CZ} & \gate{H} & \qw      & \meter{} \\
\end{quantikz}
\caption{The 4-qubit DG-DST circuit. On SC20/SC54 hardware, the CNOT is implemented as $H \cdot CZ \cdot H$ using the CZ gate native to SC20/SC54, and Ry and H gates are compiled to PRx rotations. Bell-basis measurement directly estimates $p(\Psi^{-})$ with no ancilla. Other platforms compile the same logical circuit to their native gate sets (iSWAP, GPI/GPI2/ZZ, RXX) via AWS Braket.}
\label{fig:circuit4}
\end{figure}

\subsubsection{Algebraic Equivalence of DG-CSWAP and DG-DST}

The central validation claim is that both circuits compute the same quantity from hardware-executable gates. This ``two circuits, one estimator'' property means that QPU results on the shallow DG-DST directly validate the edge estimator used for every PCam experiment, since both circuits realize the same overlap measurement.

\begin{proposition}
For pure product inputs $|\psi_a\rangle \otimes |\psi_b\rangle$, DG-CSWAP and DG-DST compute the same quantity, i.e.,
\begin{equation}
e = 1 - |\langle\psi_a|\psi_b\rangle|^2 = \sin^2\!\left(\frac{\theta_a - \theta_b}{2}\right).
    \label{eq:equiv}
\end{equation}
\end{proposition}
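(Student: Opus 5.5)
The proof computes the relevant measurement probability of each circuit separately and shows both equal the same affine function of $|\langle\psi_a|\psi_b\rangle|^2$. The final identity with $\sin^2$ is then the angle-encoding computation already stated after Eq.~\eqref{eq:swap}. Since the H and V channels act on disjoint qubits with product inputs, it suffices to treat a single channel; the joint distribution factorizes and the other channel's marginal is untouched.

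\textbf{DG-CSWAP.} Start from $|0\rangle|\psi_a\rangle|\psi_b\rangle$. Apply $H$ to the ancilla, then the controlled SWAP, then $H$ again. This gives $\tfrac12|0\rangle(|\psi_a\psi_b\rangle+|\psi_b\psi_a\rangle)+\tfrac12|1\rangle(|\psi_a\psi_b\rangle-|\psi_b\psi_a\rangle)$. The squared norm of the $|1\rangle$ branch is $\tfrac14(2-2|\langle\psi_a|\psi_b\rangle|^2)$. This recovers $P(\text{anc}=1)=\tfrac12(1-|\langle\psi_a|\psi_b\rangle|^2)$ and hence $e=2P(\text{anc}=1)$, as in Eq.~\eqref{eq:swap}.

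\textbf{DG-DST.} The gates CNOT$(c\to t)$ followed by $H(c)$ map the Bell state $|\Psi^-\rangle=(|01\rangle-|10\rangle)/\sqrt2$ to $|11\rangle$. This is checked on one basis vector: CNOT sends $|\Psi^-\rangle$ to $(|01\rangle-|11\rangle)/\sqrt2=|{-}\rangle|1\rangle$, and $H$ then gives $|11\rangle$. Because the circuit is unitary, $P(\text{``}11\text{''})=|\langle\Psi^-|\psi_a\psi_b\rangle|^2$. For product states I will use the antisymmetric-projector identity $|\Psi^-\rangle\langle\Psi^-|=\tfrac12(I-\mathrm{SWAP})$ on two qubits, which gives
\[
P(\text{``}11\text{''})=\tfrac12\bigl(1-\langle\psi_a\psi_b|\mathrm{SWAP}|\psi_a\psi_b\rangle\bigr)=\tfrac12\bigl(1-|\langle\psi_a|\psi_b\rangle|^2\bigr).
\]
This is exactly the DG-CSWAP ancilla probability, so doubling it yields the same $e$. (On a qubit the antisymmetric subspace is one-dimensional and spanned by $|\Psi^-\rangle$, which justifies the projector identity.) The circuit in Figure~\ref{fig:circuit4} implements CNOT as $H\cdot CZ\cdot H$ on the target; this is the same unitary, so the computation is unaffected.

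\textbf{Angle encoding.} Finally, substitute $|\psi(p)\rangle=\cos(\theta/2)|0\rangle+\sin(\theta/2)|1\rangle$. The overlap is $\cos(\theta_a/2)\cos(\theta_b/2)+\sin(\theta_a/2)\sin(\theta_b/2)=\cos\bigl(\tfrac{\theta_a-\theta_b}{2}\bigr)$, which gives Eq.~\eqref{eq:equiv}.

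The step requiring the most care is the DG-DST one. The Bell-basis mapping and the sign conventions (which qubit is the control and target, and the phase of $|\Psi^-\rangle$) must be verified so that the outcome $11$ corresponds to the singlet rather than another Bell state. I will check this directly on all four Bell states if needed.
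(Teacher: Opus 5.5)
Your proof is correct and takes essentially the paper's route: the swap-test ancilla probability for DG-CSWAP, identifying the DG-DST outcome $11$ with the singlet so that $P(\text{``}11\text{''})=|\langle\Psi^-|\psi_a\psi_b\rangle|^2$, and the $R_y$ overlap $\cos\bigl(\tfrac{\theta_a-\theta_b}{2}\bigr)$. You supply details the paper only asserts or cites: the explicit swap-test calculation, the projector identity $|\Psi^-\rangle\langle\Psi^-|=\tfrac12(I-\mathrm{SWAP})$ (which the paper invokes only in its closing remark on correlated inputs), and a direct check that CNOT followed by $H$ maps $|\Psi^-\rangle$ to $|11\rangle$, the direction the measurement actually uses.
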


\begin{proof}
For the 6-qubit circuit, the swap test gives $P(\text{anc}=1) = \tfrac{1}{2}(1 - |\langle\psi_a|\psi_b\rangle|^2)$, so $e = 2P(\text{anc}=1) = 1 - |\langle\psi_a|\psi_b\rangle|^2$.

For DG-DST, the circuit H$(c)$-CNOT$(c{\to}t)$ is the two-qubit Bell-basis circuit (reversed). It transforms the computational basis as $|b_1 b_0\rangle \mapsto |B_{b_1 b_0}\rangle$, mapping $|11\rangle \mapsto |\Psi^-\rangle$. Therefore:
\begin{equation}
P(\text{``}11\text{''}) = |\langle\Psi^-|\psi_a\otimes\psi_b\rangle|^2 =
    \frac{1 - |\langle\psi_a|\psi_b\rangle|^2}{2},
\end{equation}
so $e = 2P(\text{``}11\text{''}) = 1 - |\langle\psi_a|\psi_b\rangle|^2$. Both expressions equal $\sin^2\!\bigl(\tfrac{\theta_a-\theta_b}{2}\bigr)$ for $R_y$-encoded states. This result holds under the operating assumption of this work that each qubit is independently angle-encoded via $R_y(\theta)|0\rangle$, producing pure product inputs with no inter-register entanglement prior to circuit execution. For correlated inputs $\rho_{ab}$, both circuits generalize identically to $\mathrm{Tr}[(I-\mathrm{SWAP})/2\cdot\rho_{ab}]$, preserving the equivalence while changing the measured quantity~\cite{garcia2013}.
\end{proof}

The two circuits are algebraically identical realizations of the same quantum overlap measurement, chosen for their respective execution contexts: DG-CSWAP is better suited to simulators where measurement flexibility is less constrained, while DG-DST minimizes compiled two-qubit gate count for QPU execution (2 CX vs.\ 36 CX after decomposition, see Table~\ref{tab:complexity}).

\subsubsection{Replica Selection and Error Mitigation}

$R_\mathrm{rep}$ independent replicas of the DG-DST circuit run simultaneously within a single circuit execution, each on a distinct 4-qubit mapping. Qubit assignments are determined per device by scoring all connected qubit pairs using a weighted combination of two-qubit gate fidelity (weight 0.75) and mean readout fidelity of the qubit pair (weight 0.25) from the device calibration snapshot. A maximum-weight disjoint matching is then run over scored pairs, with each replica independently calibrated and fitted, enabling spatial noise profiling across chip regions. Inter-replica $p(\Psi^-)$ spread (0.10--0.15) exceeded the emulator-hardware MSE gap, confirming averaging as the dominant stabilizer. Sentinel circuits confirmed post-correction drift below 0.004, ruling out device drift.

Table~\ref{tab:mse} quantifies the effect of the correction pipeline on SC20. Hardware MSE falls from ${\sim}9\!\times\!10^{-4}$ (raw) to ${\sim}1.1\!\times\!10^{-4}$ (bias-and-scale), which represents an $8\times$ reduction. The emulator--hardware MSE gap ranges from $2.8{\times}$ (raw) to $4.4{\times}$ (readout-corrected), consistent across all mitigation levels (Table~\ref{tab:mse}), confirming the emulator is a reliable development proxy before committing to QPU cost.

\begin{table}[h]
\caption{Mean squared error (MSE) of the corrected $p(\Psi^-)$ estimator against its analytic ideal, measured on SC20 across 10 calibration cases (5 equal-state and 5 known-edge inputs) for both H and V channels. Each row shows the MSE after applying mitigation up to that stage, for hardware (QPU) and emulator separately. HW/Emu is their ratio.}
\label{tab:mse}
\centering\small
\renewcommand{\arraystretch}{0.90}
\begin{tabular}{@{}lccc@{}}
\toprule
\textbf{Mitigation} & \textbf{HW MSE} & \textbf{Emu MSE} & \textbf{HW/Emu} \\
\midrule
Raw          & ${\sim}9\!\times\!10^{-4}$ & ${\sim}3\!\times\!10^{-4}$ & $2.8\times$ \\ Readout (ro) & ${\sim}2\!\times\!10^{-4}$ & ${\sim}5\!\times\!10^{-5}$ & $4.4\times$ \\ Bias+scale   & ${\sim}1.1\!\times\!10^{-4}$ & ${\sim}3\!\times\!10^{-5}$ & $3.6\times$ \\
\bottomrule
\end{tabular}
\end{table}

\subsection{Comparison with AE-QFT Baseline}
\label{sec:comparison}

Table~\ref{tab:novelty} summarizes the key technical differences between the AE-QFT baseline and this work. The 5.75\% accuracy gap (85.55\% ensemble vs.\ 79.80\% single-model) reflects the benefit of AE-QFT's multi-model fusion, not an inherent advantage of amplitude encoding. AE-QFT required two separately trained ResNet-50 models, including a classical baseline and a hybrid edge-detection fusion model, combined via an SVM meta-classifier. This work uses a single ResNet-50. The hybrid Canny overlay peaks at $\alpha=0.2$ (where $\alpha$ is the blend weight of the Canny overlay added to the quantum edge map) with 79.40\% accuracy and degrades beyond $\alpha=0.4$, confirming that quantum edge information should dominate.

\begin{table}[h]
\caption{Technical comparison: AE-QFT baseline vs.\ proposed method.}
\label{tab:novelty}
\centering\footnotesize
\setlength{\tabcolsep}{4pt}
\renewcommand{\arraystretch}{0.90}
\begin{tabular}{@{}p{2cm}p{3.0cm}p{3.0cm}@{}}
\toprule
\textbf{Dimension} & \textbf{AE-QFT Baseline} & \textbf{This Work} \\
\midrule
Encoding & Global amplitude encoding of $64\!\times\!64$ image into 12-qubit statevector & Local angle encoding: one qubit per pixel, $R_y(\theta)$ \\
Core operation & Hadamard + QFT/IQFT basis shifting & Dual-gradient overlap primitive: parallel H/V swap tests in one circuit \\
Qubits & 12 ($\log_2 4096$) & 6 (4 data + 2 ancilla); 4 for QPU DG-DST \\
Output & Statevector readout (simulation only) & Shot-based measurement (QPU-compatible) \\
Multiscale & None & Offsets $d \in \{1,2\}$, 4 directions \\
Mitigation & None & RO correction, piecewise bias subtraction, global slope regression \\
Execution & Qiskit/Aer simulation only & Braket LocalSim + emulators + SC20/SC54 + SC82 QPUs \\
Pipeline & 5 stages, 3 models & 2 stages, 1 model \\
PCam accuracy & 85.55\% (ensemble) & 79.80\% (single model) \\
\bottomrule
\end{tabular}
\end{table}

\subsubsection{Circuit Complexity Analysis}

To quantify the resource gap between the three circuits, we compute standard compiled-gate metrics alongside the \emph{Character Complexity} measure introduced by Shami~\cite{shami}. For a circuit of gates $\{U_g\}$, the fundamental character complexity $\mathrm{CC}_{\mathrm{fund}} = \sum_{g} |\mathrm{Tr}(U_g)|^2$ measures how identity-like each gate is, while the mixing complexity $\mathrm{CC}_{\mathrm{mix}} = \sum_{g} (d_g^2 - |\mathrm{Tr}(U_g)|^2)/d_g^2$, where $d_g$ is the gate dimension, measures how much each gate scrambles the quantum state. $\mathrm{CC}_{\mathrm{mix}}$ is zero for the identity and grows with classical-simulation hardness.
Table~\ref{tab:complexity} summarizes the results. DG-DST (logical basis) is the clear hardware winner: 8~logical gates, depth~4, only 2~two-qubit entangling (CX/CNOT) gates, and the lowest $\mathrm{CC}_{\mathrm{mix}}$ (5.5). On SC20/SC54 hardware each CNOT appears as $H\!\cdot\!CZ\!\cdot\!H$ as shown in Figure~\ref{fig:circuit4}, but the compiled two-qubit gate count remains 2. DG-CSWAP incurs an $18\times$ CX overhead (36 vs.\ 2) because each CSWAP decomposes into three Toffoli gates, each requiring 6~CNOTs and 7~T-gates~\cite{amy}, and two-qubit gates are the dominant error source on real hardware. The compiled total of 162~T-gates consists of 42 from the Toffoli chain ($2{\times}3{\times}7$) plus ${\approx}120$ from Solovay--Kitaev decomposition of the four $R_y$ rotations at $\varepsilon{=}10^{-3}$ (approximation error tolerance; ${\approx}30$ T-gates each)~\cite{dawson2006}. Even at the logical level, its $\mathrm{CC}_{\mathrm{fund}}$ remains $5\times$ larger than DG-DST (80 vs.\ 16), reflecting the presence of two 3-qubit CSWAP operations. The AE-QFT baseline is exponentially more expensive, with $2^{13}{-}2=8{,}190$ CNOTs from state preparation alone~\cite{shende2006} and over 270,000 T-gates under Solovay--Kitaev approximation~\cite{dawson2006} at $\varepsilon=10^{-3}$, placing it beyond near-term execution without quantum RAM.
\begin{table}[h]
\caption{Circuit complexity comparison. Total gates, circuit depth,
$\mathrm{CC}_{\mathrm{fund}}$, $\mathrm{CC}_{\mathrm{mix}}$, and entangling fraction are reported at the logical-circuit level. CX and T-gate counts are reported after compilation/decomposition. ($\theta = \pi/2$, SK precision $\varepsilon=10^{-3}$).}
\label{tab:complexity}
\centering
\footnotesize
\setlength{\tabcolsep}{3pt}
\renewcommand{\arraystretch}{0.90}
\begin{tabular}{@{}l@{\hspace{4pt}}cccc@{}}
\toprule
Metric & DG-CSWAP & DG-DST & DG-DST & QHED \\ &          & SC20/SC54 & logical & 12q \\ &          & native &         &      \\
\midrule
Qubits              & 6    & 4   & 4   & 12 \\ Total gates         & 10   & 26  & 8   & 16{,}718 \\ Circuit depth       & 4    & 8   & 4   & 2{,}073 \\ CX count (compiled) & 36   & 2   & 2   & 8{,}190 \\ T-gate count        & 162  & 720$^\dagger$ & 120 & 270{,}210 \\ $\mathrm{CC}_{\mathrm{fund}}$ & 80.0 & 36.0 & 16.0 & 57{,}657 \\ $\mathrm{CC}_{\mathrm{mix}}$  &  6.9 & 18.5 &  5.5 &  8{,}941 \\ Entangling fraction & 0.20 & 0.08 & 0.25 & 0.49 \\
\bottomrule
\multicolumn{5}{l}{$^\dagger$Native PRX gates require no T decomposition on SC20/SC54 hardware.}
\end{tabular}
\end{table}

\subsection{Cross-Platform Full-Image Validation}
\label{sec:crossplatform}

\subsubsection{Experimental Configuration}
We extended full-image edge-map generation to five platforms available via AWS Braket~\cite{braket}, namely: SC82 (superconducting transmon, native iSWAP gate, 82 qubits), SC20 (superconducting, native CZ gate, 20 qubits), SC54 (superconducting, native CZ gate, 54 qubits), TI36 (trapped-ion, native GPI/GPI2/ZZ gates, 36 qubits), and TI12 (trapped-ion, native $R_{XX}$ M{\o}lmer--S{\o}rensen gate, 12 qubits). All implement the same logical DG-DST estimator $e = 1 - |\langle\psi_a|\psi_b\rangle|^2$ using platform-specific native-gate decompositions. In all cases the logical circuit structure (angle-encode, Bell-basis change, measure) is unchanged, and the algebraic equivalence proved in Section~\ref{sec:qpu} applies without modification.

All runs processed the same $64\!\times\!64$ H\&E histopathology patch (200 shots per image circuit, 1000 shots for calibration). Local simulator and emulator runs for all five platforms (SC82, SC20, SC54, TI36, and TI12) used $R_\mathrm{rep}=3$ replicas and one pass with the Braket LocalSimulator. The SC20 hardware run used $R_\mathrm{rep}=4$, with $P=3$ averaging passes with pixel-order shuffling between passes, and $S_\mathrm{img}=200$, $S_\mathrm{cal}=1000$ shots, with actual cost \$939.68 (124 tasks, 622,400 shots). The SC54 hardware run used $R_\mathrm{rep}=5$ replicas and $P=3$ averaging passes with pixel-order shuffling, $S_\mathrm{img}=200$, $S_\mathrm{cal}=1000$ shots (100 tasks, 500,000 shots; actual cost \$830.00). The SC82 hardware run used $R_\mathrm{rep}=20$ replicas and $P=6$ averaging passes with $S_\mathrm{img}=200$, $S_\mathrm{cal}=1000$ shots (51 tasks, 254,000 shots; actual cost \$243.90). All SC20/SC54, TI36, and TI12 runs applied the full three-stage correction pipeline (ro correction, piecewise bias subtraction, and global slope regression; Section~\ref{sec:slope}). The SC82 hardware, emulator, and local simulator runs applied the two-stage pipeline (ro correction and piecewise bias subtraction only); the slope stage was not activated for those runs. Cross-platform validation uses $(\hat{e}_H{+}\hat{e}_V)/2$ normalized at the 99.5th percentile; PCam preprocessing uses the full L2 magnitude of Eq.~(\ref{eq:mag}) with diagonal directions.

\subsubsection{Evaluation Methodology}

A classical Sobel edge map of the test image serves as the common reference. Two complementary metrics are reported. \emph{Pearson correlation} ($r$) measures how well the spatial pattern of the quantum edge map matches that of the Sobel map, independent of overall brightness, i.e., it captures where edges appear. \emph{ROC AUC} measures whether the quantum map correctly ranks true-edge pixels above non-edge pixels as a binary classification, i.e., it captures how discriminative the edge map is. The top 10\% of Sobel pixels (90th-percentile) are labeled as true edges. AUC of 0.5 is chance, 1.0 is perfect.

An $r\approx0.52$--$0.54$ between the quantum and Sobel maps is expected and correct. The quantum estimator measures the \emph{angle} between qubit-encoded pixel states (geometric dissimilarity), while Sobel measures intensity gradient magnitude (local derivative). Both respond to edges, producing partial agreement. Their divergence in how they weight different pixel pairs is inherent to the algorithms, not a quality failure. The stronger portability claim is inter-platform $r\approx0.93$--$0.94$: all five quantum platforms agree with each other far more closely than with any classical reference, confirming the quantum algorithm extracts a consistent, hardware-agnostic signal independent of device architecture.

\subsubsection{Current Results}

Table~\ref{tab:interplatform} shows inter-platform local-simulator consistency. Table~\ref{tab:crossplatform} summarizes metrics across all configurations. Figure~\ref{fig:cross_platform} compares Pearson $r$ and ROC AUC against the Sobel reference across platforms and backends.

\begin{table}[h]
\caption{Inter-platform local-simulator Pearson $r$. All pairs achieve $r\approx0.93$--$0.94$, indicating highly consistent quantum edge maps across platforms.}
\label{tab:interplatform}
\centering\small
\renewcommand{\arraystretch}{0.90}
\begin{tabular}{@{}lccccc@{}}
\toprule
& SC82 & SC20 & SC54 & TI12 & TI36 \\
\midrule
SC82 & 1.000 & 0.938 & 0.932 & 0.933 & 0.932 \\ SC20      & --    & 1.000 & 0.933 & 0.937 & 0.934 \\ SC54     & --    & --    & 1.000 & 0.932 & 0.931 \\ TI12     & --    & --    & --    & 1.000 & 0.934 \\ TI36    & --    & --    & --    & --    & 1.000 \\
\bottomrule
\end{tabular}
\end{table}

\begin{table}[h]
\caption{Cross-platform edge-map quality vs.\ classical Sobel reference. $r$: Pearson correlation; AUC: threshold-free edge quality (0.5\,=\,chance, 1.0\,=\,perfect). All local sims yield $r\approx0.52$--$0.54$ and AUC$\approx0.80$--$0.81$; SC82 emulator lower due to noise model distortion.}
\label{tab:crossplatform}
\centering\small
\renewcommand{\arraystretch}{0.90}
\begin{tabular}{@{}llcc@{}}
\toprule
\textbf{Platform} & \textbf{Backend} & \boldmath$r$ & \textbf{AUC} \\
\midrule
SC82 & Local Sim & 0.533 & 0.810 \\ SC82 & Emulator  & 0.428 & 0.734 \\ SC82 & Hardware  & 0.458 & 0.764 \\
\midrule
SC20 & Local Sim & 0.536 & 0.804 \\ SC20 & Emulator  & 0.516 & 0.798 \\ SC20 & Hardware  & 0.526 & 0.807 \\
\midrule
SC54  & Local Sim & 0.523 & 0.797 \\
SC54  & Emulator  & 0.518 & 0.803 \\ SC54  & Hardware  & 0.470 & 0.806 \\
\midrule
TI36 & Local Sim & 0.536 & 0.809 \\ TI36 & Emulator  & 0.534 & 0.815 \\
\midrule
TI12  & Local Sim & 0.531 & 0.807 \\ TI12  & Emulator  & 0.516 & 0.806 \\
\bottomrule
\end{tabular}
\end{table}

\begin{figure}[h]
\centering
\begin{tikzpicture}
\begin{groupplot}[
group style={ group size=1 by 2, vertical sep=0.7cm, }, ybar=2pt, width=\linewidth, height=4.8cm, enlarge x limits=0.15, xtick=data, xticklabels={SC82, SC20, SC54, TI36, TI12}, xticklabel style={font=\small, rotate=0, anchor=north}, legend style={ at={(0.5,1.18)}, anchor=north, font=\footnotesize, legend columns=3, /tikz/every even column/.append style={column sep=1cm} }, every axis/.append style={font=\small}, ]
\nextgroupplot[
ylabel={Pearson $r$}, ymin=0.30, ymax=0.70, ytick={0.35,0.40,0.45,0.50,0.55,0.60,0.65}, nodes near coords, nodes near coords style={font=\tiny, rotate=90, anchor=west}, every node near coord/.append style={/pgf/number format/fixed, /pgf/number format/precision=3}, ]
\addplot[fill=cBlue!75, draw=cBlue!90!black] coordinates {
(1,0.533)(2,0.536)(3,0.523)(4,0.536)(5,0.531) };
\addplot[fill=cGreen!65, draw=cGreen!85!black,
postaction={pattern=north east lines, pattern color=cGreen!50!black}] coordinates { (1,0.428)(2,0.516)(3,0.518)(4,0.534)(5,0.516) };
\addplot[fill=cPurple!60, draw=cPurple!85!black,
postaction={pattern=crosshatch, pattern color=cPurple!55!black}] coordinates { (1,0.458)(2,0.526)(3,0.470)(4,0)(5,0) };
\legend{Local Sim, Emulator, Hardware}
\nextgroupplot[
ylabel={ROC AUC}, ymin=0.68, ymax=0.86, ytick={0.70,0.73,0.76,0.79,0.82,0.85}, nodes near coords, nodes near coords style={font=\tiny, rotate=90, anchor=west}, every node near coord/.append style={/pgf/number format/fixed, /pgf/number format/precision=3}, ]
\addplot[fill=cBlue!75, draw=cBlue!90!black] coordinates {
(1,0.810)(2,0.804)(3,0.797)(4,0.809)(5,0.807) };
\addplot[fill=cGreen!65, draw=cGreen!85!black,
postaction={pattern=north east lines, pattern color=cGreen!50!black}] coordinates { (1,0.734)(2,0.798)(3,0.803)(4,0.815)(5,0.806) };
\addplot[fill=cPurple!60, draw=cPurple!85!black,
postaction={pattern=crosshatch, pattern color=cPurple!55!black}] coordinates { (1,0.764)(2,0.807)(3,0.806)(4,0)(5,0) };
\end{groupplot}
\end{tikzpicture}
\caption{Preliminary cross-platform quality vs.\ Sobel (single run per platform). All local sims: $r\approx0.52$--$0.54$; SC20 hardware: $r=0.526$, $r(\text{hw,local})=0.931$; SC54 hardware: $r=0.470$, $r(\text{hw,local})=0.895$; SC82 hardware: $r=0.458$. SC82 emulator lower ($r=0.428$) due to noise model distortion. Hardware runs not performed for TI36 or TI12.}
\label{fig:cross_platform}
\end{figure}

\begin{table}[h]
\caption{Local-to-emulator fidelity and mean-edge inflation per platform. $r_{L,E}$: local-sim/emulator Pearson $r$; $\Delta\bar{e}$: mean-edge inflation; $r_{L,H}$: local-sim/hardware $r$ (``---''\,=\,no QPU run performed).}
\label{tab:emulator_summary}
\centering\small
\renewcommand{\arraystretch}{0.90}
\begin{tabular}{@{}lrrl@{}}
\toprule
\textbf{Platform} & \boldmath$r_{L,E}$ & \boldmath$\Delta\bar{e}$ & \boldmath$r_{L,H}$ \\
\midrule
SC82 & 0.776 & $+6.5\%$  & 0.822 \\ SC20      & 0.905 & $+20.0\%$ & 0.931 \\
SC54     & 0.911 & $+9.6\%$  & 0.895 \\ TI12     & 0.921 & $+4.9\%$  & ---   \\ TI36    & 0.921 & $+6.2\%$  & ---   \\
\bottomrule
\end{tabular}
\end{table}

\subsubsection{Discussion}

Five observations follow from Table~\ref{tab:crossplatform}, which are stated below.
\begin{itemize}\setlength{\itemsep}{0pt}\setlength{\parsep}{0pt}\setlength{\topsep}{2pt}

\item \textit{All local simulators agree closely.} All five local simulators
achieve $r\approx0.52$--$0.54$ and AUC$\approx0.80$--$0.81$ against the Sobel reference (Table~\ref{tab:crossplatform}), with inter-platform $r\approx0.93$--$0.94$ (Table~\ref{tab:interplatform}) confirming near-identical quantum edge maps across both hardware families.

\item \textit{SC82 hardware outperforms its own emulator.} The hardware run
achieves $r=0.458$ and AUC$=0.764$ vs.\ the Sobel reference, above the emulator's $r=0.428$ and AUC$=0.734$. The local-to-hardware correlation is $r_{L,H}=0.822$ (Table~\ref{tab:emulator_summary}), compared to local-to-emulator $r_{L,E}=0.776$. This improvement is attributable primarily to the 6-pass, 20-replica averaging applied on hardware, which suppresses shot noise and replica-periodic stripe artifacts beyond what the 1-pass, 3-replica emulator achieves. Per-pass pixel-pair reordering distributes noise load across chip regions, enabling pass averaging to cancel regional variation (spatial chip threading). The SC82 emulator's $r_{L,E}=0.776$ is notably lower than SC20/TI12/TI36 ($r_{L,E}\approx0.89$--$0.92$, Table~\ref{tab:emulator_summary}), suggesting that even the updated SC82 noise model has a somewhat higher structural distortion than the other platforms, not captured by the mean-inflation metric alone.

\item \textit{SC20 hardware: clean full-image result.} The SC20
hardware run ($R_\mathrm{rep}=4$, $P=3$, with pixel-order shuffling between passes) produced an edge map with mean-edge inflation of $+12\%$ relative to the local simulator, and stripe correlation of $-0.001$ (essentially zero), indicating the three-pass shuffle completely suppressed replica-periodic artifacts. Hardware-to-local-simulator Pearson $r=0.931$ (Table~\ref{tab:emulator_summary}) matches the inter-platform local-simulator correlation ($r\approx0.93$), confirming that hardware noise at this operating point introduces no measurable structural degradation beyond the shared quantum signal.

\item \textit{SC54 hardware: consistent with SC20.} The SC54
hardware run ($R_\mathrm{rep}=5$, $P=3$, with pixel-order shuffling) achieved hardware-to-local-simulator Pearson $r_{L,H}=0.895$ and AUC$=0.806$ vs.\ the Sobel reference, with mean-edge hardware inflation of $+23\%$ relative to its local simulator (compared to $+9.6\%$ emulator inflation in Table~\ref{tab:emulator_summary}) and a stripe correlation of $0.016$ substantially suppressed by the three-pass shuffle.

\item \textit{TI12 and TI36 confirm trapped-ion consistency.} TI12 achieves
$r=0.531$ and AUC$=0.807$ on the local simulator, consistent with all other platforms. The emulator shows only 4.9\% mean-edge inflation with $r_{L,E}=0.921$ (Table~\ref{tab:emulator_summary}), confirming the trapped-ion noise model preserves spatial structure. The TI36 local sim ($r=0.536$, AUC$=0.809$) and emulator ($r=0.534$, AUC$=0.815$) differ by less than 0.01 on all metrics, with $r_{L,E}=0.921$ and only $+6.2\%$ mean-edge inflation, confirming the noise model does not significantly alter the edge-map structure at this shot budget.
\end{itemize}

Figure~\ref{fig:roi_comparison} shows representative hardware edge maps on the same H\&E test patch. SC82 and SC20 respond to the same boundary structures across ROIs A--C despite using distinct native gate sets (iSWAP vs.\ CZ). The noisier appearance of the SC82 map reflects its two-stage correction and the higher structural distortion of the SC82 noise model noted in Table~\ref{tab:emulator_summary}. SC20's three-stage pipeline yields a cleaner map ($r_{L,H}=0.931$ vs.\ $r_{L,H}=0.822$). Together, these results confirm the estimator extracts a hardware-agnostic edge signal and establish a measurable baseline from which noise mitigation improvements can be quantified.

\begin{figure}[h]

  \centering
  \includegraphics[width=\linewidth]{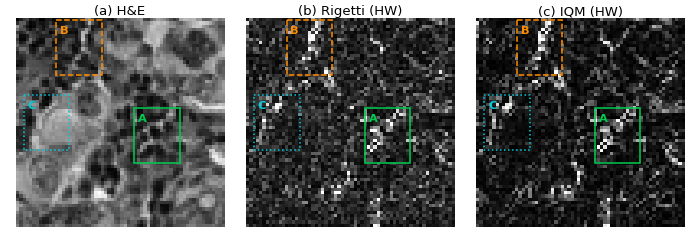}
  \caption{Hardware quantum edge maps on the same $64\!\times\!64$ H\&E patch. (a)~Original. (b)~SC82 ($R_\mathrm{rep}{=}20$, $P{=}6$, two-stage correction). (c)~SC20 ($R_\mathrm{rep}{=}4$, $P{=}3$, three-stage correction). ROIs A--C: A (dense nuclear cluster), B (inter-cellular boundary region), C (gland-lumen interface). Both platforms detect the same boundaries using different native gate sets (iSWAP vs.\ CZ), confirming hardware-agnostic edge detection ($r_{L,H}{=}0.822$ SC82, $r_{L,H}{=}0.931$ SC20).}
  \label{fig:roi_comparison}

\end{figure}

\subsection{Classification Results and Noise Robustness}
\label{sec:results}

\subsubsection{PCam Classification}

All classification experiments use ResNet-50 trained on PCam patches ($96\!\times\!96$), averaged over 5 independent runs. For quantum conditions, the grayscale edge-magnitude map is replicated to three channels; the unprocessed baseline uses full RGB. The DG-CSWAP quantum edge map achieves 79.80\% accuracy and simultaneously raises sensitivity above the no-preprocessing baseline (0.636$\to$0.673) while maintaining specificity well above classical Canny (0.864 vs.\ 0.671), as shown in Figure~\ref{fig:preprocess_comparison}. Braket Full and Qiskit Full configurations agree within 0.47\% (5-run Monte Carlo average), confirming platform-independent behavior. The Lite configuration ($64\!\times\!64$, $d=1$, no diagonals) trades 2.59\% accuracy (79.80\%$\to$77.21\%) for a $17\times$ preprocessing speedup, with Lite achieving the highest sensitivity (0.739) across all conditions (preferred when minimizing false negatives).

\begin{figure}[h]

\centering
\begin{tikzpicture}
\begin{axis}[
ybar=3pt, bar width=10pt, width=\linewidth, height=4.8cm, enlarge x limits=0.18, symbolic x coords={None (baseline), Canny, DG-CSWAP, Hybrid}, xtick=data, xticklabel style={font=\small, rotate=0, anchor=north}, ymin=0.55, ymax=1.02, ytick={0.60,0.70,0.80,0.90,1.00}, yticklabels={0.60,0.70,0.80,0.90,1.00}, ylabel={Score}, ylabel style={font=\small}, tick label style={font=\small}, legend style={font=\small, at={(0.99,0.99)}, anchor=north east}, legend columns=3, grid=major, grid style={dotted,gray!30}, ]
\addplot[fill=cBlue!75, draw=cBlue!90!black] coordinates {
(None (baseline), 0.8213) (Canny,           0.7422) (DG-CSWAP,        0.7980) (Hybrid,          0.7967) };
\addplot[fill=cGreen!65, draw=cGreen!85!black,
postaction={pattern=north east lines, pattern color=cGreen!50!black}] coordinates { (None (baseline), 0.6364) (Canny,           0.8769) (DG-CSWAP,        0.6731) (Hybrid,          0.6827) };
\addplot[fill=cPurple!60, draw=cPurple!85!black,
postaction={pattern=crosshatch, pattern color=cPurple!55!black}] coordinates { (None (baseline), 0.9657) (Canny,           0.6711) (DG-CSWAP,        0.8639) (Hybrid,          0.8569) };
\legend{Val Acc, Sensitivity, Specificity}
\end{axis}
\end{tikzpicture}
\caption{PCam results (5-run MC avg, ResNet-50). Quantum edge map raises
sensitivity (0.636$\to$0.673) while keeping specificity far above Canny (0.864 vs.\ 0.671).}
\label{fig:preprocess_comparison}

\end{figure}

\subsubsection{Noise Robustness}

To assess algorithmic sensitivity to hardware error, we injected three standard quantum error channels (depolarizing, amplitude damping, bit-flip) into the Braket LocalSimulator using the Lite configuration. Across all nine tested configurations (three channels $\times$ three error rates), validation accuracy ranges from 75.75\% to 79.07\%, which is a drop of at most 4.05\% from the noiseless Full result (79.80\%). Depolarizing noise at $p=0.10$ has the least impact (79.07\%), nearly matching the noiseless result. Amplitude damping at $\gamma=0.30$ causes the largest accuracy drop (75.75\%) but yields the highest sensitivity (0.811), suggesting the channel selectively amplifies edge responses relevant to cancer detection. These results confirm the estimator is robust to both real NISQ noise (corrected by the three-stage pipeline) and idealized error models.

\section{Conclusion}
\label{sec:conclusion}

We have introduced a dual-gradient overlap primitive for hardware-compatible quantum edge detection, framed as a practical measurement protocol for overlap-derived edges. The method computes two independent H/V gradient estimates in a single circuit execution, provides a full image-to-image transform with tunable geometry (four directions, two offsets), and is QPU-validated via an algebraically equivalent shallow circuit.
A local angle-encoding approach replaces global amplitude encoding and statevector readout with shot-based QPU-executable measurements, enabling full 64$\times$64 edge-map generation on real hardware. Multi-scale edge evaluation at configurable pixel-neighbor offsets $d \in \{1,2\}$ across four directions is supported. A three-stage NISQ correction pipeline (readout confusion-matrix inversion, piecewise-interpolated bias subtraction, and global slope regression) reduces SC20 single-pixel hardware MSE by ${\sim}8{\times}$. We prove algebraic equivalence between DG-CSWAP and DG-DST for the same estimator $e = 1 - |\langle\psi_a|\psi_b\rangle|^2$, enabling the ``two circuits, one estimator'' QPU validation strategy. Cross-platform full-image validation on SC82, SC20, SC54, TI12, and TI36 yields inter-local-simulator $r\approx0.93$--$0.94$ across all ten pairs, with full-image QPU runs on SC20 and SC54 achieving $r=0.931$ and $r=0.895$ to their local simulators.

Executing these algorithms on real quantum hardware introduced unique challenges, most significantly the high cost of QPU access, which directly limited shot budgets and the number of platforms that could be fully evaluated. The multi-replica, multi-pass averaging strategy adopted in this work is a deliberate response to this constraint. By distributing circuit instances across independent qubit subsets within a single task submission, reliable edge maps are obtained within a fixed cost envelope without scaling per-circuit shot counts.

On PCam, the DG-CSWAP quantum edge map achieves 79.80\% accuracy with a single ResNet-50. It simultaneously raises sensitivity above the unprocessed baseline (0.636$\to$0.673) and maintains high specificity (0.864). The QPU-validated estimator, cross-platform digital-twin characterization across five hardware architectures, and a fully reproducible protocol establish this method as a practical framework for hardware-aware quantum preprocessing in medical image classification. 

We note that the results reported in this paper were obtained in spring of 2026 and reflect the capabilities of the hardware, compilers, and cloud platforms available at that time. As quantum computing hardware and software ecosystems are rapidly evolving, updated results will be reported at the conference and in the final paper.


\section*{Acknowledgment}
This work was supported by the SenSIP Center at Arizona State University. Amazon Web Services provided access to Amazon Braket and quantum hardware. Some of the tools developed in this work may also benefit an ongoing National Science Foundation (NSF IUSE) education project.

\enlargethispage{2\baselineskip}
\balance

\end{document}